\newcommand{\wt}{\widetilde}
\newtheorem*{theorem*}{Theorem}
\newtheorem*{definition*}{Definition}
\theoremstyle{definition}
\newtheorem{definition}{Definition}[section]
\theoremstyle{plain}
\newtheorem{theorem}[definition]{Theorem}
\newtheorem{lemma}[definition]{Lemma}
\newtheorem{corollary}[definition]{Corollary}
\theoremstyle{remark}
\newcommand{\comments}[1]{}
\DeclareMathOperator{\U}{U}
\DeclareMathOperator{\SU}{SU}
\newsavebox{\mstrut}
\newcommand{\bbra}[1]{%
	\sbox{\mstrut}{\(#1\)}%
	\mathinner{\langle\kern-0.3\ht\mstrut\left\langle{#1}\right|}%
}
\newcommand{\kett}[1]{%
	\sbox{\mstrut}{\(#1\)}%
	\mathinner{\left|{#1}\right\rangle\kern-0.3\ht\mstrut\rangle}%
}
\def\l@subsubsection#1#2{} 
\begin{document}

%%%%%%%%%%%%%%%%%%%%%%%%%%%%%%%%%%%%%%%%%%%%%%%%%%%%%%%%%%%%%%%%%%%%%%%%%%%%%%
\title{Toward {Super-polynomial} Quantum Speedup of Equivariant Quantum Algorithms with SU($d$) Symmetry}% Force line breaks with \\

\author{Han Zheng$^{**}$}
\email{hanz98@uchicago.edu}
\affiliation{Department of Statistics, The University of Chicago, Chicago, IL 60637, USA}
\affiliation{DAMTP, Center for Mathematical Sciences, University of Cambridge, Cambridge CB30WA, UK}

\author{Zimu Li$^{**}$}
\email{lizm@mail.sustech.edu.cn}
\affiliation{DAMTP, Center for Mathematical Sciences, University of Cambridge, Cambridge CB30WA, UK}

\author{Sergii Strelchuk}
\email{sergii.strelchuk@cs.ox.ac.uk}
\affiliation{Department of Computer Science, University of Oxford}

\author{Risi Kondor}
\email{risi@cs.uchicago.edu}
\affiliation{Department of Statistics, The University of Chicago, Chicago, IL 60637, USA}
% \affiliation{Institute of Computational and Applied Mathematics, The University of Chicago, Chicago, IL 60637, USA}
\affiliation{Department of Computer Science, The University of Chicago, Chicago, IL 60637, USA}
% \affiliation{Flatiron Institute, New
% York City, NY 10010, USA}

\author{Junyu Liu}
\email{junyuliu@pitt.edu}
\affiliation{Pritzker School of Molecular Engineering, The University of Chicago, Chicago, IL 60637, USA}
\affiliation{Chicago Quantum Exchange, Chicago, IL 60637, USA}
\affiliation{Kadanoff Center for Theoretical Physics, The University of Chicago, Chicago, IL 60637, USA}
\affiliation{Department of Computer Science, The University of Pittsburgh, Pittsburgh, PA 15260, USA}

% \date{\today}% It is always \today, today,
             %  but any date may be explicitly specified

\begin{abstract}
     We introduce a framework of the equivariant convolutional quantum algorithms which is tailored for a number of machine-learning tasks on physical systems with arbitrary SU$(d)$ symmetries. It allows us to enhance a natural model of quantum computation -- permutational quantum computing (PQC) [Quantum Inf. Comput., 10, 470–497 (2010)] -- and define a more powerful model: PQC+. While PQC was shown to be efficiently classically simulatable, we exhibit a problem which can be efficiently solved on PQC+ machine, whereas no classical polynomial time algorithm is known; thus providing evidence against PQC+ being classically simulatable. 
     We further discuss practical quantum machine learning algorithms which can be carried out in the paradigm of PQC+. 
     %generalize Jordan's permutational quantum computing (PQC) framework with which to provide a rigorous analysis of the {super-polynomial} quantum speedup on evaluating matrix entries of the Fourier coefficients of the symmetric group $S_n$. 
     
     %Our mathematical formalism utilizes the celebrated Schur--Weyl duality on qudits, offering {super-polynomial} quantum speedup in a generic setting comparing with best known $S_n$ fast Fourier transform ($S_n$-FFT) algorithms in factorial time. Based on this generic {super-polynomial} quantum speedup, we generalize the complexity class permutational quantum polynomial time (PQP) to (PQP$+$), whereupon we illustrate its extreme relevance in designing quantum machine learning algorithms with rigorous proven {super-polynomial} quantum speedup and beyond.
\end{abstract}

\maketitle

%------------------------------------------------------------------------------------------------------------------------------------------------

\section{Introduction}
Symmetry plays a fundamental role in modern physics. Symmetry considerations have also emerged as crucial in several areas of quantum computing, such as equivariant quantum optimization algorithms \cite{meyer2022exploiting, Grinko2022LinearPW,PRXQuantum.3.030341,nguyen2022theory} and covariant quantum error correcting codes \cite{eastin2009restrictions,Zhou2021newperspectives,kong2022near}. One of the most fundamental continuous symmetries on qudits is $\SU(d)$ symmetry, which generalizes U(1) charge sectors into $d-1$-dimensional charge sectors \cite{yoshida:soft,junyu2020chargescrambler}. Many challenging physical problems, such as finding the ground state of quantum many-body Hamiltonians, exhibit a natural global continuous symmetry. This is exploited for example in $S_n$-equivariant convolutional quantum alternating ans"atze ($S_n$-CQA) to find the ground state of 2D frustrated magnets \cite{Zheng2021SpeedingUL}. One might ask if such algorithms potentially imply quantum advantage over conventional classical optimization methods such as tensor networks or quantum neural states (NQS). Quantum advantage has been established on the random circuit sampling \cite{boixo2018characterizing,arute2019quantum}; are we capable of achieving the same in the presence of SU($d$) symmetry?

Elegantly, the celebrated Schur--Weyl duality implies that any global SU($d$) symmetric problem is related to not just the representation theory of SU($d$) but also of the symmetric group $S_n$. More precisely, any SU($d$)-symmetric Hamiltonian can be expressed as an element of the \emph{symmetric group algebra} $\mathbb{C}[S_n]$. Notably, Jordan's permutational quantum computing (PQC) \cite{pqc} provides a framework towards investigating quantum advantage in the presence of global SU($d$) symmetry using $S_n$ representation theory. It takes inspiration from topological quantum computing (TQC) and the Penrose spin network model~\cite{penrose1971angular}. In particular, the permutational quantum polynomial time (PQP) complexity class is conjectured to capture classical intractability by approximating matrix entries of the irreducible representations (irreps) of the symmetric group $S_n$ in polynomial time. However, recently a classical polynomial time algorithm was found to compute these elements \cite{sergii1, sergii2}.

In this paper we investigate the question of quantum speedup in the presence of global SU($d$) symmetry by generalizing the framework of PQC, to what we call PQC+. Instead of approximating matrices elements for $S_n$ irreps by polynomial time quantum circuits, let us consider Hermitian operators taken from $\mathbb{C}[S_n]$. As mentioned above, they can be interpreted as $\SU(d)$-symmetric Hamiltonians by Schur--Weyl duality, whose unitary time evolutions are to be approximated in PQC+. The corresponding matrix entries are often called $S_n$ Fourier coefficients. The best known classical algorithms to compute such coefficients take $\mathcal{O}(n! n^2)$ time~\cite{Clausen_1993, Maslen_1998}. Our main result in this paper is to show that quantum circuits are capable of approximating the matrix entry of any $k$-local SU($d$)-symmetric Hamiltonian in polynomial time $\mathcal{O}\left( tCk^3 n^k \frac{\log (tC kn^k/\epsilon)}{\log \log (tC kn^k/\epsilon)}\right)$ up to error $\epsilon$, where $n$ is the number of qudits. Therefore, a potential \emph{{super-polynomial}} quantum speedup is achieved compared to the classical $S_n$ fast Fourier transform (FFT) with $\mathcal{O}(n! n^2)$ costs. Note that similar to the quantum speedup in Jordan's PQC, the quantum circuits can only approximate \emph{single} matrix entry at a time, while classical $S_n$-FFTs compute the entire matrix in the exact sense. However, in the scenarios where we expect this possible {super-polynomial} speedup to be useful, such as approximating the ground state of quantum many-body Hamiltonians, it is sufficient to approximate only individual matrix entries.

It has been shown that single matrix entries of any element in $S_n$ in a given irrep can be classically simulated \cite{sergii2, sergii1}. In our case with any $k$-local SU($d$)-symmetric Hamiltonian, its unitary time evaluation can be approximated using the truncated Taylor series to the order $K$, resulting in $\mathcal{O}(n^{k K})$ many terms. To reach the given precision $\epsilon$ of the truncated Taylor series, the cut-off value $K$ scales with the system size $n$ by $K = \mathcal{O}(\frac{\log (tC kn^k/\epsilon)}{\log \log (tC kn^k/\epsilon)})$. As a result, compared to the method that utilizes the classical algorithm in \cite{sergii2, sergii1} to compute $\mathcal{O}(n^{k K})$ matrix entries from the truncated Taylor series, the quantum algorithm proposed in this work suggests a \emph{{super-polynomial}} speed-up.

\begin{figure}[ht]
	\centering
	%\rule{\linewidth}{3cm}
	\includegraphics[width=2in]{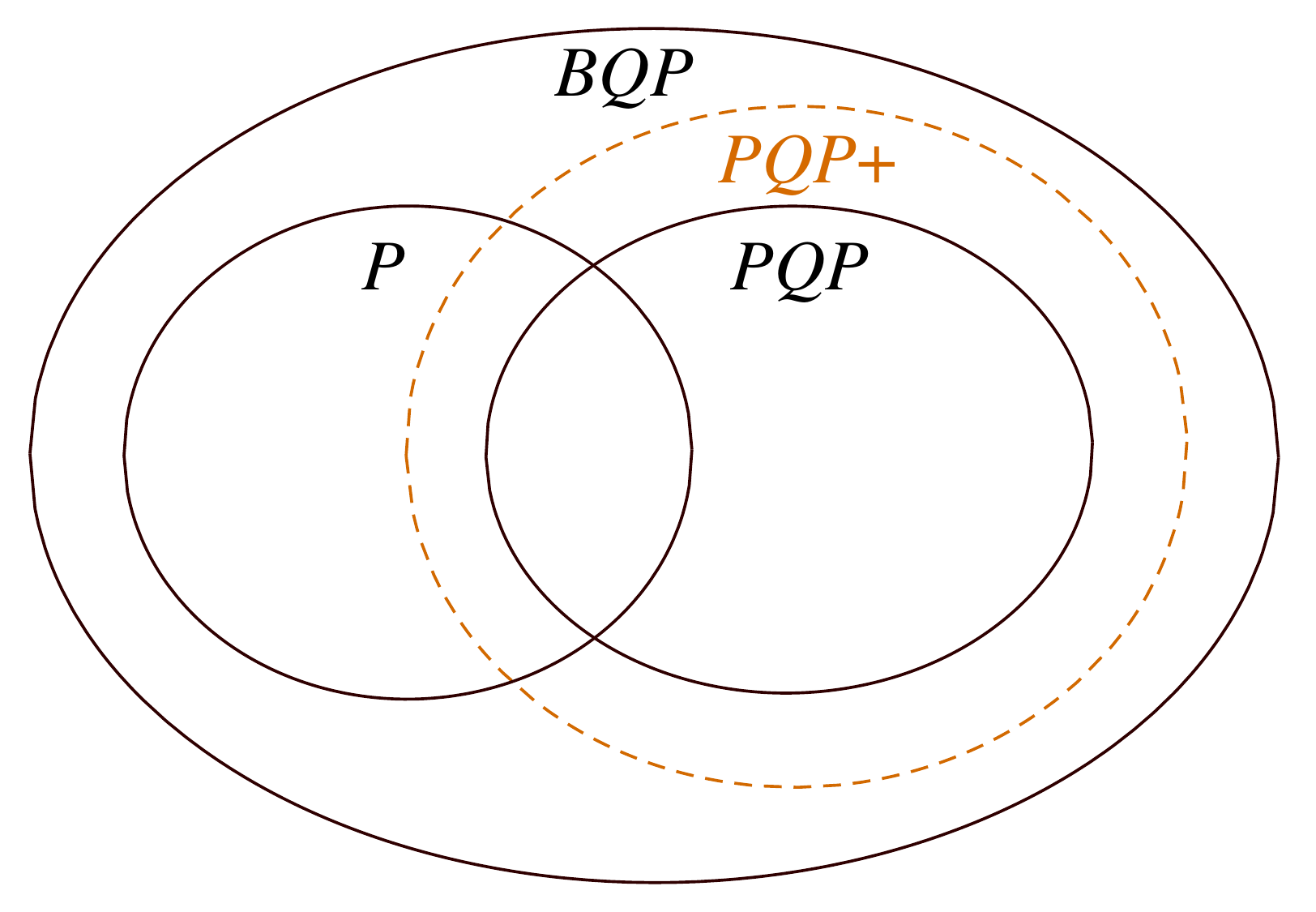}
	\caption{The relationship between complexity classes. It is not known whether PQP+ strictly contains PQP.}
	\label{PQP}
\end{figure}

We demonstrate the significance of the new complexity class PQP+ by showing that matrix multiplication of SU($d$)-symmetric gates has a natural interpretation as $S_n$-equivariant convolution. Equivariance is one of the key properties of classical Convolutional Neural Networks (CNNs), widely used in computer vision. Intuitively, equivariance captures the property that if the input to the neural network is shifted, then its activations translate accordingly. We argue that in quantum circuits, the natural analog of translational equivariance in the Fourier space is the permutation equivariance, for which the corresponding complexity class is PQP+ and offers a generic quantum speedup. The corresponding quantum algorithms are natural candidates for solving physical problems in the presence of global SU($d$) symmetry. By Schur--Weyl duality, it is also desirable to consider quantum algorithms under $S_n$ symmetry. However, as we explain in the following, the subspaces concerned that respect $S_n$ symmetry are always polynomially large in the number $n$ of qudits. Consequently, it should be more careful to argue the potential quantum speedup using the dimensionality in this case. 

%------------------------------------------------------------------------------------------------------------------------------------------------

\section{Symmetry and Equivariant Convolution}

 Let \( V \) be a \( d \)-dimensional complex Hilbert space with orthonormal basis \( \{e_1,\dots,e_d\} \). We take \( V^{\otimes n} \) to represent the \( n \)-qudit system, and it admits two natural representations: the \textit{tensor product representation} \( \pi_{\SU(d)} \) of \( \SU(d) \) is
\begin{align}
    \pi_{\SU(d)}(g) (e_{i_1} \otimes \cdots \otimes e_{i_n}) \vcentcolon = g \cdot e_{i_1} \otimes \cdots \otimes g \cdot e_{i_n}, 
\end{align}
where \( g \in \SU(d) \) acts transversally on each qudit. The \textit{permutation representation} \( \pi_{S_n} \) of \( S_n \) is
\begin{align}
    \pi_{S_n}(\sigma) (e_{i_1} \otimes \cdots \otimes e_{i_n}) \vcentcolon =
    e_{i_{\sigma^{-1}(1)}} \otimes \cdots \otimes e_{i_{\sigma^{-1}(n)}}.
\end{align}
where \( \sigma \in S_n \) permutes qudits.

Schur--Weyl duality reveals the relationship between these two group actions on the same Hilbert space. This duality is widely used in quantum computation and quantum information \cite{harrow2005PhD,Hayashi2006,childs2007weak,Keppeler2018,Keppeler2018,harrow2016local,Hayashi2023}. In particular, in Quantum Chromodynamics, it was used to decompose the \( n \)-fold tensor product of \( \operatorname{SU}(3) \) representations. In that context, standard Young tableaux are referred to as Weyl-tableaux and labeled by the three isospin numbers (\( u,d,s \)). The underlying Young diagrams containing three rows \( \lambda = (\lambda_1,\lambda_2,\lambda_3) \) are used to denote irreducible representations (irreps) of \( \operatorname{SU}(3) \) \cite{Zeilinger2019}. On the other hand, \( S_n \) irreps are also conventionally indexed by Young diagrams \cite{Sagan01}. Schur--Weyl duality says that irreps of \( \SU(d) \) and \( S_n \) are dual in the following sense, and are denoted by the same Young diagrams \emph{with \( n \) boxes and at most \( d \) rows}.

\begin{theorem*}[Schur--Weyl Duality] 
    The action of \( \SU(d) \) and \( S_n \) on \( V^{\otimes n} \) jointly decomposes the space into irreducible representations of both groups in the form 
    \begin{align}
        V^{\otimes n} = \bigoplus_\lambda W_\lambda\otimes S^\lambda,
    \end{align}
    where \( W_\lambda \) and \( S^\lambda \) denote irreps of \( \SU(d) \) and \( S_n \), respectively, and \( \lambda \) ranges over all Young diagrams of size \( n \) with at most \( d \) rows. Consequently,
    \begin{align}
        & \pi_{\SU(d)} \cong \bigoplus_\mu W_\mu \otimes \operatorname{1}_{m_{\SU(d),\mu}}, \\
        & \pi_{S_n} \cong \bigoplus_\lambda \operatorname{1}_{ m_{S_n,\lambda}} \otimes S^\lambda,
    \end{align}
    where \( m_{\SU(d),\mu} = \dim S^\mu \) and \( m_{S_n,\lambda} = \dim W_\lambda \).
\end{theorem*}

\begin{figure}[H]
    \centering
    \includegraphics[width=3.2in]{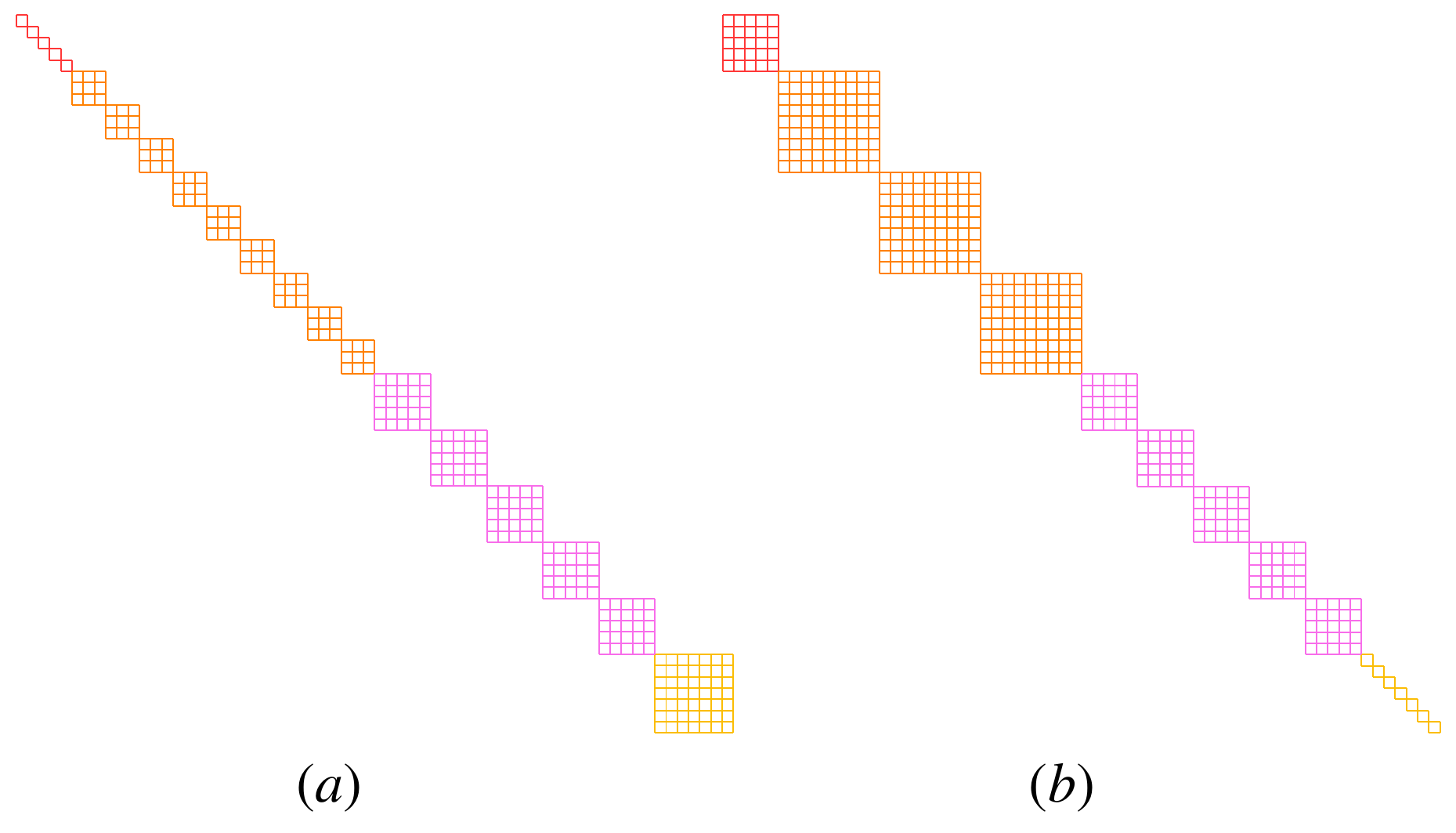}
    \caption{\((a)\) is the decomposition of \((\mathbb{C}^2)^6\) with respect to \(\operatorname{SU}(2)\), while \((b)\) is for \(S_6\) by Schur--Weyl duality.}
    \label{SchurWeyl}
\end{figure}

By definition, one can check that \( \pi_{S_n} \) commutes with \( \pi_{\SU(d)} \). The so-called double commutant theorem (see \cite{Sagan01,Goodman2009,Tolli2009} and the SM) strengthens the statement that any matrix that commutes with \( \pi_{\SU(d)} \) is a representation of \( \mathbb{C}[S_n] \), and vice versa. This fact has been mentioned many times before, since \( \SU(d) \)-symmetric Hamiltonians are completely captured by \( \mathbb{C}[S_n] \). Moreover, let \( f_1, f_2 \in \mathbb{C}[S_n] \). Their convolution is defined as
\begin{align}
    (f_1 \ast f_2)(\sigma) \vcentcolon = \sum_{\tau \in S_n} f_1(\tau) f_2(\tau^{-1} \sigma),  
\end{align}
which is again an element of \( \mathbb{C}[S_n] \). An \( S_n \)-equivariant convolution is established in the following identity by the left action \( L_\eta \) on \( \mathbb{C}[S_n] \): 
\begin{align}\label{EquivConv}
    \big( (L_\eta f_1) \ast f_2 \big)(\sigma) = \big(L_\eta (f_1 \ast f_2) \big)(\sigma).
\end{align}
The representation of the left action \( L_\eta \) is just \( \pi(\eta) \), and Eq.~\eqref{EquivConv} reflects the associativity of multiplication or, equivalently, the commutativity of left and right translation actions of operators on the representation space or Fourier space, in the sense that \( \wt{\pi}(f) \) encodes the Fourier coefficients of \( f \) \cite{KondorTrivedi2018,kondor2}. Fourier-space CNNs have been applied to spherical image recognition \cite{Cohen2017spherical}, chemistry \cite{TensorFieldNetworks2018arxiv,CormorantNeurIPS}, and particle physics \cite{Kondor_2020} (e.g., Lorentz group equivariance). 

It should be noted that the \( \SU(d) \) tensor product representation also commutes with \( S_n \) representations. To avoid ambiguity, it is referred to as the \( \SU(d) \) symmetry, while the \( S_n \)-equivariance mentioned here is defined in the sense of Eq.~\eqref{EquivConv} using convolutions.

%------------------------------------------------------------------------------------------------------------------------------------------------

\section{PQP+ and Super-polynomial Speedup}

A unitary transformation that decomposes \( V^{\otimes n} \) as above is called the \emph{quantum Schur transform} (QST), denoted \( U_{\operatorname{Sch}} \). Based on the context, the Schur transform can be theoretically conducted in two ways. The first uses sequential Clebsch–Gordan decomposition \cite{harrow2005PhD, pqc, sergii1}, which transforms standard matrix representations of \( \SU(d) \) into irrep matrix blocks, as shown in Fig.~\ref{SchurWeyl}~(a). When \( d = 2 \), this method is common in quantum physics, where total angular momentum or spin is used to label \( \SU(2) \) irreps. The second method relies more on \( S_n \) representation theory \cite{Tolli2009,krovi,Zheng2021SpeedingUL} and appears less intuitive at first glance. It decomposes permutation matrices into \( S_n \) irrep matrix blocks, as shown in Fig.~\ref{SchurWeyl}~(b). The relationship between these two methods is discussed in \cite{Zheng2021SpeedingUL}. As the main focus of this work is on quantum algorithms with \( \SU(d) \) symmetry, we will work with \( S_n \) irreps in the following. We call the basis that reveals \( S_n \) irreps the \emph{Young–Yamanouchi basis}, or simply the Young basis. For more details on Schur–Weyl duality and \( S_n \) representation theory, we refer readers to Refs.~\cite{Sagan01,Goodman2009,Tolli2009}.

In the original setting of PQC, we first prepare a Young basis vector \( \ket{v} = U_{\operatorname{Sch}} \ket{x} \) for some \( x \in \{0, 1\}^n \). The QST \( U_{\operatorname{Sch}} \) can be efficiently implemented for qudits with gate complexity \( \operatorname{Poly}(n, \log d, \log(1/\epsilon)) \) \cite{harrow2005PhD,Harrow06,krovi}. We then measure the matrix entries \( \bra{u} \pi(\sigma) \ket{v} \). Any permutation can be synthesized by \( \mathcal{O}(n^2) \) SWAP gates, and these measurements define the permutational quantum polynomial (PQP) class \cite{pqc}. The ability to compute the matrix entries of the \( S_n \) irreps in polynomial time was conjectured to reflect non-classical aspects of quantum computing. One motivation for this conjecture is the scaling behavior of the dimensions of irreps arising from \( V^{\otimes n} \) via Schur–Weyl duality. Given a Young diagram \( \lambda \) with at most \( d \) rows,
\begin{align}
	& \dim W_\lambda = \prod_{1 \leq i < j \leq d} \frac{\lambda_i - \lambda_j - i - j}{i - j}, \\
	& \dim S^\lambda = \frac{n!}{\prod_{i,j} h_{i,j}},
\end{align}
where the latter is the \emph{hook-length formula}, and \( h_{i,j} \) denotes the number of boxes including the \( (i,j) \) box, plus those to the right and below it \cite{Sagan01,Tolli2009}. It is easy to check that \( \dim W^\lambda \) is bounded polynomially in \( n \), while \( \dim S^\lambda \) may scale exponentially depending on \( \lambda \). For example, let \( n = 2m \) (even) and \( d = 2 \), and take \( \lambda = (m, m) \). Then
\begin{align}
	\dim S^\lambda = \frac{(2m)!}{(m+1)! \, m!} = \frac{2^m}{m+1} \prod_{k=1}^m \frac{2k-1}{k} > \frac{2^m}{m+1}.
\end{align}
Evaluating matrix entries of this size was thought to be classically hard, but a classical sampling method was later found in \cite{sergii1,sergii2} for qubits.

In this work, we extend the PQP class to a larger class, PQP\(^+\), to better capture non-classical features. As mentioned earlier, the motivation comes from interpreting quantum circuits as Fourier space representations of groups. Let $f = \sum_{\sigma_i \in S_n} c_i \sigma_i \in \mathbb{C}[S_n]$. That is, $f(\sigma_i) = c_i$. We further assume that its matrix representation $\wt{\pi}(f) = \sum_{\sigma_i \in S_n} c_i \, \pi(\sigma_i)$ is Hermitian. We define PQP\(^+\) as the set of polynomially-bounded quantum algorithms capable of approximating matrix entries of the following Hamiltonian simulation on qudits:
\begin{align} \label{autocorr}
	\exp(-i \wt{\pi}(f)) & = \sum_{n = 0}^\infty \frac{(-i)^n}{n!} (\wt{\pi}(f))^n \\
	& = \sum_{n = 0}^\infty \frac{(-i)^n}{n!} \wt{\pi}(\underbrace{f \ast \cdots \ast f}_{n \text{ times}}). \notag
\end{align}
By the Wedderburn theorem, Eq.~\eqref{autocorr} encompasses all \emph{unitary} \( S_n \)-Fourier coefficients. If \( f \) is supported on \( k \) qubits, it can be efficiently simulated using standard Hamiltonian simulation methods. We show that the implementation complexity, which depends linearly on time and polynomially on the number of qudits, can be achieved using the Linear Combination of Unitaries (LCU) method \cite{Childs2012,Berry_2015,Childs2018}, as formalized in the following theorem. Combined with efficient QST for preparing Young basis vectors, this generalization of PQC to \( \mathbb{C}[S_n] \) shows the potential for \emph{super-polynomial} quantum speedup in computing matrix entries of \( \mathbb{C}[S_n] \) Fourier coefficients, for which no known classical polynomial-time algorithm exists.

\begin{theorem} \label{thmpqcd}
	Let $H = \wt{\pi}(f) = \sum_i c_i \sigma_i$ be a $k$-local $\mathbb{C}[S_n]$ Hamiltonian with Young basis elements $\ket{u}, \ket{v}$ prepared using the efficient QST. The quantum circuit on qudits is able to simulate the matrix entry $\bra{u}\exp(-itH) \ket{v}$ by 
	\begin{align}
		\mathcal{O}\left( tCk^3 n^k \frac{\log (tC kn^k/\epsilon)}{\log \log (tC kn^k/\epsilon)}\right)
	\end{align}
	SWAP gates with the constant $C = \max_i \vert c_i \vert$. 
\end{theorem}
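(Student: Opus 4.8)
\emph{Proof strategy.} The plan is to write $H$ as an explicit linear combination of unitaries (LCU) of small $\ell_1$ weight, implement the associated select oracle using only SWAP gates, and then invoke the truncated-Taylor-series Hamiltonian simulation of Refs.~\cite{Childs2012,Berry_2015,Childs2018}; the SWAP count is then just the LCU query complexity times the SWAP cost of one select call.

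First I would unpack $k$-locality. Because $f$ is supported on at most $k$ of the $n$ sites, $H=\wt\pi(f)=\sum_i c_i\,\pi(\sigma_i)$ where each $\sigma_i$ belongs to the copy of $S_k$ permuting some $(\le k)$-element subset of the registers. The number of terms is at most $\sum_{j\le k}\binom{n}{j}\,j!=\sum_{j\le k}\tfrac{n!}{(n-j)!}\le k\,n^{k}$, so with $C=\max_i|c_i|$ the one-norm obeys $\|c\|_1:=\sum_i|c_i|\le Ckn^{k}=:\alpha$. Each $\pi(\sigma_i)$ is a permutation matrix, hence unitary, and since $\sigma_i$ moves at most $k$ registers it is a product of at most $k-1$ transpositions, that is, $\Or(k)$ SWAP gates; the index-controlled $\mathrm{SELECT}=\sum_i\ket{i}\!\bra{i}\otimes\pi(\sigma_i)$, taken uniformly over all $i$, then costs $\Or(k^{2})$ (controlled-)SWAP gates on the qudits --- for instance via a routing/sorting network on the $k$ involved wires whose $\binom{k}{2}$ comparators are individually index-controlled. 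The preparation oracle $\mathrm{PREP}\colon\ket{0}\mapsto\|c\|_1^{-1/2}\sum_i\sqrt{|c_i|}\,\ket{i}$ and the coherent index arithmetic act only on $\Or(k\log n)$ ancilla qubits and contribute no SWAP gates on the qudit register.

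Next I would feed $(\mathrm{PREP},\mathrm{SELECT})$ into the standard LCU routine: split $\exp(-itH)$ into $\Or(\alpha t)$ segments $\exp(-iHt_{\mathrm{seg}})$ with $\alpha t_{\mathrm{seg}}=\Or(1)$; approximate each segment by its Taylor series truncated at order $K=\Or\!\big(\tfrac{\log(\alpha t/\epsilon)}{\log\log(\alpha t/\epsilon)}\big)$, which is a short LCU making $\Or(K)$ calls to $\mathrm{SELECT}$; and apply $\Or(1)$ rounds of oblivious amplitude amplification per segment (which suffices since each segment is $\Or(1)$-close to a unitary). Refs.~\cite{Berry_2015,Childs2018} then give an $\epsilon$-accurate circuit for $\exp(-itH)$ with $\Or(\alpha tK)=\Or\!\big(\alpha t\,\tfrac{\log(\alpha t/\epsilon)}{\log\log(\alpha t/\epsilon)}\big)$ calls to $\mathrm{SELECT}$. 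Multiplying by the $\Or(k^{2})$ SWAP cost of one $\mathrm{SELECT}$ and substituting $\alpha\le Ckn^{k}$ yields the stated $\Or\!\big(tCk^{3}n^{k}\,\tfrac{\log(tCkn^{k}/\epsilon)}{\log\log(tCkn^{k}/\epsilon)}\big)$ SWAP gates. To finally read off $\bra{u}\exp(-itH)\ket{v}$, one prepares $\ket{u},\ket{v}$ with the efficient high-dimensional Schur transform ($\operatorname{Poly}(n,\log d,\log(1/\epsilon))$ gates), runs the circuit above, and extracts $\Re$ and $\Im$ of the amplitude with a Hadamard/swap test ($\Or(1)$ overhead); neither changes the leading order.

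The substantive step is the second one, and it is exactly where the \emph{permutation} representation is indispensable: for a generic $S_n$-irrep $S^{\lambda}$ the matrices $\pi^{\lambda}(\sigma_i)$ act on an exponentially large space, so $\mathrm{SELECT}$ would be an exponentially large block, whereas in the permutation representation each $\pi(\sigma_i)$ is just $\Or(k)$ physical SWAPs on the $n$ qudits and the Young-basis block structure is inherited automatically from Schur--Weyl duality. The rest is bookkeeping --- verifying that $k$-locality simultaneously caps the number of terms by $\operatorname{Poly}(n)$ and the per-term SWAP cost by $\operatorname{Poly}(k)$ independently of $n$, and checking that the non-unitarity intrinsic to the LCU block-encoding is removed by oblivious amplitude amplification, so that the $\log/\log\log$ query bound of Refs.~\cite{Berry_2015,Childs2018} carries over verbatim.
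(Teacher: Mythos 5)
Your proposal is correct and follows essentially the same route as the paper's proof: bound the number of $k$-local permutations by $\mathcal{O}(kn^k)$ so that $\Vert c\Vert_1 \leq \mathcal{O}(Ckn^k)$, apply the truncated-Taylor-series LCU of Refs.~\cite{Childs2012,Berry_2015,Childs2018} with $\mathcal{O}(tCkn^k)$ segments and truncation order $K = \mathcal{O}\bigl(\log(tCkn^k/\epsilon)/\log\log(tCkn^k/\epsilon)\bigr)$, and charge $\mathcal{O}(k^2)$ SWAPs per term query. The only differences are cosmetic bookkeeping (you count terms directly rather than via derangements, and you make the SELECT construction and the Hadamard-test readout explicit where the paper simply asserts $\mathcal{O}(k^2)$ geometrically local SWAPs per permutation), so the two arguments coincide in substance.
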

\begin{proof}
	We first note a standard result for the symmetric group $S_n$: the number of permutations supported exactly on $l$ elements, or equivalently with $n-l$ elements being fixed is given by the \textit{derangement}:
	\begin{align}
		D_l = \frac{n!}{(n-l)!} \Big( \frac{1}{2!} - \frac{1}{3!} + \cdots + (-1)^l \frac{1}{l!} \Big).
	\end{align}
	If $H$ is $k$-local, then it contains at most $D_2 + \cdots + D_k$ different permutations in $S_n$. Since $n!/ (12 (n-l)!)  \leq D_l \leq n!/ (4 (n-l)!)$, the number is of order $\mathcal{O}(k n^k)$. Let $C = \max_i \vert c_i \vert$, then $\Vert c \Vert_1 \vcentcolon = \sum_i \vert c_i \vert \leq \mathcal{O}( C k n^k)$ where $\Vert c \Vert_1$ is defined in the setting of LCU.
	
	Let us divide the Hamiltonian evolution $\exp(-itH)$ into $M$ steps $\exp(-i \Delta t H)$. We set $M = t C k n^k$ so that $\Delta t \Vert c \Vert_1 = t \Vert c \Vert_1 /M = \mathcal{O}(1)$. This is a crucial step to validate the so-call \emph{oblivious amplitude amplification} in LCU (see \cite{Childs2012,Berry_2015,Childs2018} and SM for more details). Then we truncate the Taylor series of each product term to order $K$:
	\begin{align}
		\hspace{-3mm} \Vert \exp(-i \Delta t H) - \hspace{-1mm} \sum_{m = 0}^{K-1} \frac{(-i\Delta t H)^m}{m!} \Vert = \frac{(\Delta t \Vert H \Vert )^K}{K!} \leq \tilde{\epsilon}. \hspace{-1mm}
	\end{align}
	In order to bound the total error by $\epsilon$, each product term should be simulated within error $\tilde{\epsilon} = \epsilon/M$. Using $(m/e)^m \leq m!$, the above inequality holds when $K \log (K/\Delta t \Vert H \Vert e) \geq \log (1/\tilde{\epsilon})$. Note that $\Vert H \Vert$ means operator norm here and should not be confused with $\Vert c \Vert_1$. However, by definition,
	\begin{align}
		\Vert H \Vert = \max_{\Vert \ket{\psi} \Vert = 1} \Vert H \ket{\psi} \Vert \leq \sum_i \vert c_i \vert \max_{\Vert \ket{\psi} \Vert = 1} \Vert \sigma_i \ket{\psi} \Vert. 
	\end{align}
	As $\sigma_i$ are unitary, $\Vert H \Vert$ is upper bounded by $\Vert c \Vert_1$. Thus $\Delta t \Vert H \Vert = \mathcal{O}(1)$, we can further relax the condition to require $K \log K \geq \log (1/\tilde{\epsilon})$. Let $K'  \geq \frac{\log 1/\tilde{\epsilon}}{\log \log 1/\tilde{\epsilon}}$. Then
	\begin{align}
		K' \log K' \geq \log \frac{1}{\tilde{\epsilon}} - \log \frac{1}{\tilde{\epsilon}} \frac{\log \log 1/\tilde{\epsilon}}{\log \log \log 1/\tilde{\epsilon}} \approx  \log \frac{1}{\tilde{\epsilon}}  
	\end{align}
	for small $\tilde{\epsilon}$. Setting $K = K'$ this bound also recovers the result of \cite{Berry_2015}. Since the time evolution also contains $M$ steps and since any $k$-local permutation $\sigma_i$ can be written in $\mathcal{O}(k^2)$ geometrically local SWAPs, the total circuit complexity is: 
	\begin{align}
		\mathcal{O}\left( k^2 MK \right) = \mathcal{O}\left(tCk^3 n^k \frac{\log (tC kn^k/\epsilon)}{\log \log (tC kn^k/\epsilon)}\right),
	\end{align}
	which finishes the proof.
\end{proof}

Note that SWAP gates on qudits are assumed to be easy to implement with constant overhead. We can thus disregard the polynomially-scaling constant on $k$ in the approximation and write the total circuit complexity $\mathcal{O}( t n^k \log (tn^k/\epsilon)/{\log \log (tn^k/\epsilon)}$. Theorem \ref{thmpqcd} establishes a generic possible quantum {super-polynomial} speedup. 

In qubits, the permutation operators are related by the exchange interactions. 
\begin{align} \label{trans}
	\pi((i \: j)) = 2\hat{\boldsymbol{S}}_{i} \cdot \hat{\boldsymbol{S}}_{j} + \frac{1}{2}I,
\end{align}
where $\hat{\boldsymbol{S}}_{i} = \frac{1}{2}(X,Y,Z)$ is defined by the Pauli operators and the identity is simply obtained by the expansion of SWAPs under Pauli basis \cite{Heisenberg1928, Klein_1992,RobertsChaos2017}. For example, consider $f = (1 \: 2) + (2 \: 3) + (3 \: 4) + (4 \: 1)$. Therefore under the representation $\wt{\pi}$,
\begin{align}
\begin{aligned}
	& H_P \equiv \wt{\pi}(f) \\
	= & 2(\hat{\boldsymbol{S}}_{1} \cdot \hat{\boldsymbol{S}}_{2} + \hat{\boldsymbol{S}}_{2} \cdot \hat{\boldsymbol{S}}_{3} + \hat{\boldsymbol{S}}_{3} \cdot \hat{\boldsymbol{S}}_{4} + \hat{\boldsymbol{S}}_{4} \cdot \hat{\boldsymbol{S}}_{1} + I), 
\end{aligned}
\end{align}
is simply the 1D Heisenberg chain with a periodic boundary condition of 4 spins. By definition, the standard Heisenberg model admits the $\SU(2)$ symmetry and has received extensive studies in quantum many body theory \cite{Marshall1955,Lieb62,Tasaki2020}. Analyzing the ground state property of Heisenberg Hamiltonian using symmetry adapted variational quantum algorithm is also developed in, e.g., Refs.~\cite{VieijraRBM,Seki_2020,Zheng2021SpeedingUL}.

\begin{corollary}\label{thmpqc}
	With the above assumption, but exclusively on a qubits system, the matrix entry $\bra{u}\exp(-itH) \ket{v}$ can be simulated by a quantum circuit with
	\begin{align}
		\mathcal{O}\left( t L(k)n^{k} \frac{\log(tL(k)n^k/\epsilon)}{\log\log(tL(k)n^k/\epsilon)}\right)
	\end{align}
	$k$-local Pauli operators with the constant $L(k) = 2^{k-1}C$.
\end{corollary}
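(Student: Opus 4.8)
\emph{Proof proposal.} The plan is to reduce the statement to Theorem~\ref{thmpqcd} by re-expressing the $k$-local $\CC[S_n]$ Hamiltonian $H=\wt\pi(f)=\sum_i c_i\pi(\sigma_i)$ in the Pauli basis, and then re-running the LCU analysis of Theorem~\ref{thmpqcd} verbatim with $k$-local Pauli gates playing the role of the $\mathcal{O}(k^2)$-SWAP compilations of permutations. First I would upgrade the identity~\eqref{trans} to the statement that any transposition satisfies $\pi((i\,j))=\tfrac12\big(I+X_iX_j+Y_iY_j+Z_iZ_j\big)$, a linear combination of at most four Pauli strings, each of weight $\le 2$ and supported on $\{i,j\}$, whose coefficient vector has $\ell_1$-norm exactly $2$. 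Next, since any permutation supported on a set of $k$ points is a product of at most $k-1$ transpositions of that same set, and since $\pi$ extended to $\CC[S_n]$ is an algebra homomorphism (the property used for $\wt\pi$ above), each $k$-local $\pi(\sigma_i)$ is a product of at most $k-1$ such four-term sums. Expanding the product and using that a product of Pauli strings is again a Pauli string up to phase, supported on the union of the factors' supports and hence on $\le k$ qubits, gives $\pi(\sigma_i)=\sum_j d^{(i)}_j P^{(i)}_j$ with each $P^{(i)}_j$ a $k$-local Pauli operator and $\sum_j|d^{(i)}_j|\le 2^{k-1}$, since the coefficient $\ell_1$-norm is submultiplicative under these products ($\prod_{r=1}^{k-1}(\tfrac12\cdot 4)=2^{k-1}$).

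Summing over $i$ yields $H=\sum_m c'_m P_m$ with the $P_m$ ranging over $k$-local Pauli strings and $\|c'\|_1\le 2^{k-1}\|c\|_1$. By the derangement counting in the proof of Theorem~\ref{thmpqcd}, $\|c\|_1=\mathcal{O}(Ckn^k)$, so $\|c'\|_1=\mathcal{O}(2^{k-1}Ckn^k)=\mathcal{O}(L(k)kn^k)$, and the number of distinct terms is likewise at most $\mathcal{O}(3^k n^k)$; following the convention already adopted for the SWAP overhead I would absorb the polynomial-in-$k$ prefactors and write $\|c'\|_1=\mathcal{O}(L(k)n^k)$. From here the argument is identical to Theorem~\ref{thmpqcd}: split $\exp(-itH)$ into $M=\mathcal{O}(t\|c'\|_1)=\mathcal{O}(tL(k)n^k)$ segments $\exp(-i\Delta t H)$ with $\Delta t\|c'\|_1=\mathcal{O}(1)$ so that oblivious amplitude amplification applies; Taylor-truncate each segment to order $K$, and since $\|H\|\le\|c'\|_1$ forces $\Delta t\|H\|=\mathcal{O}(1)$, the requirement $K\log K\ge\log(1/\tilde\epsilon)$ with $\tilde\epsilon=\epsilon/M$ is met by $K=\Theta\big(\log(1/\tilde\epsilon)/\log\log(1/\tilde\epsilon)\big)$; implement each segment by LCU. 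The only change from Theorem~\ref{thmpqcd} is that each select step now applies a product of at most $K$ native $k$-local Pauli operators instead of $K$ permutations compiled into $\mathcal{O}(k^2)$ SWAPs each, so a segment costs $\mathcal{O}(K)$ $k$-local Pauli operators and the total is $\mathcal{O}(MK)$. Substituting $M$ and $\tilde\epsilon$ gives the claimed $\mathcal{O}\big(tL(k)n^k\,\log(tL(k)n^k/\epsilon)/\log\log(tL(k)n^k/\epsilon)\big)$; the Young-basis states $\ket u,\ket v$ are produced by the qubit Schur transform in $\mathrm{Poly}(n,\log(1/\epsilon))$ gates, which is subdominant.

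The step that I expect to need the most care is the bound $\sum_j|d^{(i)}_j|\le 2^{k-1}$ on the Pauli expansion of a $k$-local permutation: one has to check that composing transpositions with overlapping supports never pushes the support beyond $k$ qubits and never makes the coefficient $\ell_1$-norm exceed the naive product of the per-factor norms, and that collisions among the resulting Pauli strings — which can only shrink the term count but come with signed coefficients — do not spoil the estimate. This is precisely why the clean constant $L(k)=2^{k-1}C$ appears rather than something smaller. Everything downstream is a transcription of the LCU bound already established in the proof of Theorem~\ref{thmpqcd}.
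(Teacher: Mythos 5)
Your proposal is correct and follows essentially the same route as the paper: decompose each $k$-local permutation into at most $k-1$ transpositions, expand each via the identity $\pi((i\,j))=2\hat{\boldsymbol{S}}_i\cdot\hat{\boldsymbol{S}}_j+\tfrac12 I$ into $k$-local Pauli strings, bound the resulting coefficient $\ell_1$-norm per permutation by $2^{k-1}$ (the paper gets this from the binomial sum $\sum_j 2^{k-1-2j}(3/4)^{k-1-j}\binom{k-1}{j}=2^{k-1}$, which is just your submultiplicativity computation written out), and then feed $\Vert c'\Vert_1=\mathcal{O}(L(k)n^k)$ back into the LCU analysis of Theorem~\ref{thmpqcd}. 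The only cosmetic difference is that you re-run the $M$, $K$ bookkeeping explicitly where the paper simply substitutes into Theorem~\ref{thmpqcd}, so no gap to report.
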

\begin{proof}
	We first note a simple fact that any $\sigma \in S_n$ can be expanded as a product of cycles, e.g., $(123) = (12)(23) \in S_3$. By assumption, $\sigma_i$ is $k$-local and hence can be expanded by at most $k-1$ transpositions $\tau_{i_j} = (i_j, i_{j+1})$. These transpositions may not be geometrically local, but by Eq.\eqref{trans} one can compile them as a product of Pauli gates. Then $\sigma_i$ equals
	\begin{align} \label{ham_Pauli}
		& \prod_{j = 1}^{k-1} \tau_{i_j} = \prod_{j = 0}^{k-1} (2\hat{\boldsymbol{S}}_{i_j} \cdot \hat{\boldsymbol{S}}_{i_{j+1}} + \frac{1}{2} I) \\
		= & \sum_{j = 0}^{k-1} \binom{k-1}{j} 2^{k-1-2j} \underbrace{( \hat{\boldsymbol{S}}_{i_1} \cdot \hat{\boldsymbol{S}}_{i_2}) \circ \cdots \circ ( \hat{\boldsymbol{S}}_{i_{k-1}} \cdot \hat{\boldsymbol{S}}_{i_k}  ) }_{j \operatorname{ \text{many terms replaced by}} I}. \notag
	\end{align}
	The operator norm of $\hat{\boldsymbol{S}}_{i_j} \cdot \hat{\boldsymbol{S}}_{i_{j+1}}$ can be seen directly from Eq. \eqref{trans} as $\frac{3}{4}$. Because we are going to expand $H$ by Pauli operators and then apply LCU, we should recompute $\Vert c \Vert_1$ used in Theorem \ref{thmpqcd}. As a first step, the sum of absolute values of coefficients of each $\sigma_i$ is bounded by
	\begin{align}
		\sum_{j = 0}^{k-1} 2^{k-1-2j} (\frac{3}{4})^{k-1-j} \binom{k-1}{j} = 2^{k - 1},
	\end{align}
	where the last step uses binomial theorem. Then $\Vert c \Vert_1$ is bounded by $n^k L(k)$ as $H$ contains at most $\mathcal{O}(n^k)$ $k$-local permutations. Substituting this into Theorem \ref{thmpqcd}, we complete the proof.
\end{proof}

A more involved derivation using Pauli gates in Theorem \ref{thmpqc} in the form of unitary coupled clusters suggests a potentially hardware-friendly implementation on quantum computers \cite{Anand_2021}.  

Our results suggest that the PQP+ class is able to capture some non-classical aspects of quantum computation in the qudit case. The above form of $\mathbb{C}[S_n]$ Hamiltonian simulation by Eq.~\eqref{trans} encompasses a large class of physically relevant problems, such as the Heisenberg spin model. Naturally, multiplication of $\mathbb{C}[S_n]$ Hamiltonian operators correspond to $\mathbb{C}[S_n]$ convolution, which leads to further useful applications of machine learning and optimization tasks. 
In the related work \cite{Zheng2021SpeedingUL}, we focus on utilizing the quantum speedup to address quantum machine learning tasks by designing convolutional quantum ans{\"a}tze. From this point of view, PQP+ may be interpreted as the natural complexity class for \emph{quantum $S_n$-Fourier space activation} for quantum machine learning and optimization tasks. 

%------------------------------------------------------------------------------------------------------------------------------------------------

\section{Discussion}

The significance of studying PQC originally stems from imitating the behavior of anyons by a quantum computer. The so-called braided group $B_n$ which determines the symmetry of these systems are relegated to its elementary counterpart: the symmetric group $S_n$ \cite{pqc,Kitaev2006}. Nevertheless, studying matrix entry from $\mathbb{C}[S_n]$ is still a highly nontrivial task for testing quantum advantages. 

Theoretically, Schur--Weyl duality indicates those operators who commute with $\SU(d)$ tensor product representation, i.e., with $\SU(d)$ symmetry, must be elements from $\mathbb{C}[S_n]$ under the permutation representation and vice versa \cite{Goodman2009,Tolli2009}. This enriches the context of PQP+ with at least two more research directions: $(a)$ one could study $\SU(d)$ symmetric Hamiltonian. For instance, the spin-$\frac{1}{2}$ Heisenberg Hamiltonian $H$ with or without frustration admits the $\operatorname{SU}(2)$ invariance. Schur--Weyl duality indicates that matrix representation of $H$ under Young basis looks like what we draw in Fig.\ref{SchurWeyl} $(b)$. Even the dimension of $S_n$-irreps would still scale exponentially, it is verified numerically in Ref.~\cite{Zheng2021SpeedingUL} that the ans{\"a}tze preserving $\operatorname{SU}(2)$ symmetry, or what we call $S_n$-convolutional quantum ans{\"a}tze ($S_n$-CQA), exhibit a high performance in searching ground states. $(b)$ one could study quantum circuits with $\SU(d)$ symmetry as imposing symmetry would make some well-known results for common quantum circuits false and provide more deep insights into quantum computation theoretically. For instance, any quantum circuit can be generated by 2-local unitaries for qubits as well as for qudits \cite{Vlasov2001,Brylinski2001,Sawicki2016}. However, this statement is proved to be false in Refs.~\cite{MarvianNature,MarvianSU2,Zheng2021SpeedingUL} for $\U(1)$ and $\SU(d)$ symmetric quantum circuits. Symmetric unitaries with trivial relative phases can be generated locally, while replenishing a complete basis for relative phases necessitate the non-locality. The symmetric universality leads to the construction of symmetric random circuits, which is desirable for a physically concrete model of near optimal covariant quantum error-correcting code \cite{Zhou2021newperspectives,kong2022near}. In addition, it also relates to recent works on the linear growth of circuit complexity \cite{brandao2021models,brian2022linear, haferkamp2022linear, li2022short}. Furthermore, by replacing the scrambling of $\U(1)$-charged information \cite{yoshida:soft,junyu2020chargescrambler} with generalized SU($d$)-charged information, we may also consider the $\SU(d)$-symmetric Hayden-Preskill protocol. 

On practical sides, Permutation-equivariant layers from $\mathbb{C}[S_n]$ admit a hardware-friendly compilation based on exchange-type interactions. Exponential‑SWAP (\(e^{-i\theta\,\mathrm{SWAP}}\)) and related \(\sqrt{\mathrm{SWAP}}\)/$i$SWAP gates, together with single‑qubit rotations, generate the required symmetric operations and are universal on qubits \cite{divincenzo2000universal}, avoiding an explicit Schur transform in the critical path. As such, symmetry-preserving variational circuits are compatible with several platforms. Trapped ions and neutral‑atom arrays offer global rotations and collective spin–spin couplings for preparing symmetric (Dicke) superpositions; superconducting and bosonic (cavity/circuit QED) devices provide native iSWAP/XY/parametric couplers and eSWAP-style entanglers \cite{gao2018entangling}. Moreover, symmetry can be directly leveraged for error mitigation in near-term experiments. In particular, \emph{symmetry verification} projects noisy outputs back to the target symmetry sector and has been validated experimentally in variational settings~\cite{BonetMonroig2018SymmetryVerification,Sagastizabal2019SVExperiment}. Beyond the mitigation, families of permutation-invariant codes \cite{Ouyang2014PIcodes, Ouyang2021Deletion} and $\U(1)$- ($\SU(d)$-)covariant codes \cite{Zhou2021newperspectives,kong2022near} provide natural candidates of quantum error correction protocols.

\vspace{3 mm}

\section*{Acknowledgements}
HZ and ZL contribute equally in this work. We thank Liang Jiang, Antonio Mezzacapo, Kristan Temme,  Hy Truong Son, Erik H. Thiede, Pei Zeng, and Changchun Zhong for valuable discussions.  JL is supported in part by International Business Machines (IBM) Quantum through the Chicago Quantum Exchange, and the Pritzker School of Molecular Engineering at the University of Chicago through AFOSR MURI (FA9550-21-1-0209). JL is also supported in part by the University of Pittsburgh, School of Computing and Information, Department of Computer Science, Pitt Cyber, PQI Community Collaboration Awards, John C. Mascaro Faculty Scholar in Sustainability, NASA under award number 80NSSC25M7057, and Fluor Marine Propulsion LLC (U.S. Naval Nuclear Laboratory) under award number 140449-R08. SS acknowledges support from the Royal Society University Research Fellowship. This research used resources of the Oak Ridge Leadership Computing Facility, which is a DOE Office of Science User Facility supported under Contract DE-AC05-00OR22725.

\emph{Note added:} Our final revisions coincided with the emergence of numerous interesting papers on quantum circuits with symmetries. This new literature covers a broad range of topics, including geometric quantum machine learning \cite{ragone2022representation,sauvage2022building}, thermalization \cite{majidy2023critical,Majidy2023review,Liu2024Mpemba,chang2024deep,SUd-k-Design2023Application}, symmetric random circuits \cite{SUd-k-Design2023,U(1)Design2023,Marvian3local,mitsuhashi2024Designs2,li2024efficient}, complexity \cite{castrosilva2025symmetricquantumcomputation}, and approximate error correction \cite{PhysRevLett.133.240603,lin2024covariantquantumerrorcorrectingcodes}. We have addressed most of these topics in our Discussion, and their promising progress highlights the growing importance of symmetric quantum systems at the intersection of physics, mathematics, and quantum information.

\bibliography{references.bib}

\clearpage
%\newpage
%\onecolumngrid
\widetext
\appendix
\begingroup
\section*{Appendix}

%------------------------------------------------------------------------------------------------------------------------------------------------

\section{More facts about Schur--Weyl duality and $S_n$ representation}

\begin{lemma}\label{HeisenbergSWAP}
	Any local part $\hat{\boldsymbol{S}}_{i} \cdot \hat{\boldsymbol{S}}_{j}$ constituting a Heisenberg Hamiltonian $H = \sum_{ij} J_{ij} \hat{\boldsymbol{S}}_{i} \cdot \hat{\boldsymbol{S}}_{j}$ can be written as
	\begin{align}
		\hat{\boldsymbol{S}}_{i} \cdot \hat{\boldsymbol{S}}_{j} = \frac{1}{2} (ij) - \frac{1}{4} I.
	\end{align}
\end{lemma}
\begin{proof}
	Let us consider $\hat{\boldsymbol{S}}_1 \cdot \hat{\boldsymbol{S}}_2$. Expanded by definition,
	\begin{align}
		\hat{\boldsymbol{S}}_1 \cdot \hat{\boldsymbol{S}}_2 = \frac{1}{2} (J_{12}^2 - J_1^2 - J_2^2),
	\end{align}
	where for now the subscripts on $J^2$ denotes all sites $J^2$ acting on. Under total spin basis, it is easy to see that
	\begin{align}
		J_{12}^2 - J_1^2 - J_2^2 = \begin{pmatrix} 2 & 0 & 0 & 0 \\ 0 & 2 & 0 & 0 \\ 0 & 0 & 2 & 0 \\ 0 & 0 & 0 & 0  \end{pmatrix} - \frac{3}{4} I - \frac{3}{4} I = \begin{pmatrix} \frac{1}{2} & 0 & 0 & 0 \\ 0 & \frac{1}{2} & 0 & 0 \\ 0 & 0 & \frac{1}{2} & 0 \\ 0 & 0 & 0 & -\frac{3}{2} \end{pmatrix}.
	\end{align}
	While
	\begin{align}
		(12) = \begin{pmatrix} 1 & 0 & 0 & 0 \\ 0 & 1 & 0 & 0 \\ 0 & 0 & 1 & 0 \\ 0 & 0 & 0 & -1 \end{pmatrix}. 
	\end{align}
	Therefore,
	\begin{align}
		\hat{\boldsymbol{S}}_1 \cdot \hat{\boldsymbol{S}}_2 = \frac{1}{2} (J_{12}^2 - J_1^2 - J_2^2) = \frac{1}{2} ((12) - \frac{1}{2}I).
	\end{align} 
	This argument holds for any $i,j$, hence the proof follows. 
\end{proof}

\begin{theorem}\label{Commutant}
	Let $\langle \pi_{\operatorname{SU}(d)} \rangle$ be the collection of all matrices on $V^{\otimes n}$ generated by $\pi_{\operatorname{SU}(d)}$ (i.e., taking both linear spans and matrix products) and let $\langle \pi_{S_n} \rangle$ be defined similarly. Note that $\langle \pi_{S_n} \rangle$ is just the representation of $\mathbb{C}[S_n]$. Let $M$ be an arbitrary matrix of $V^{\otimes n}$. If it commutes with $\langle \pi_{\operatorname{SU}(d)} \rangle$, then $M \in \langle \pi_{S_n} \rangle$. If it commutes with $\langle \pi_{S_n} \rangle$, then $M \in \langle \pi_{\operatorname{SU}(d)} \rangle$.
\end{theorem}

This theorem is proved by the so-called \textit{double commutant theorem}, details can be found in \cite{Goodman2009}. 

\begin{theorem}[Wedderburn Theorem]\label{Wedderburn}
	Given any $S_n$-irrep $S^\lambda$, let $\operatorname{End}(S^\lambda)$ denote the collection of linear transformation of $S^\lambda$. With respect to any basis, e.g., the Young basis, $\operatorname{End}(S^\lambda)$ is simply the collection of all $\dim S^\lambda \times \dim S^\lambda$ matrices. As a vector space, the group algebra $\mathbb{C}[S_n]$ is isomorphic with the direct sum of $\operatorname{End}(S^\lambda)$:
	\begin{align}
		\mathbb{C}[S_n] \cong \bigoplus_{\lambda \vdash n} \operatorname{End}(S^\lambda).
	\end{align}
\end{theorem}

We emphasize that being different from decomposing $V^{\otimes n}$ by $\operatorname{SU}(d)-S_n$ duality, all kinds of Young diagrams $\lambda$, standing for inequivalent $S_n$-irreps, appears once and only once in the above direct sum decomposing $\mathbb{C}[S_n]$. In any case, restricting to each $\lambda$, $\wt{\pi}_{\mathbb{C}[S_n]}$ produces all $\dim S^\lambda \times \dim S^\lambda$ matrices and hence any matrix commuting with $\wt{\pi}_{\mathbb{C}[S_n]}$ should be a scalar.

%------------------------------------------------------------------------------------------------------------------

\section{Linear Combination of Unitaries}

Linear Combination of Unitaries (LCU) was proposed in \cite{Childs2012,Berry_2015,Childs2018} and we adapt this as a possible methods for simulating Hamiltonians constructed by elements from $\mathbb{C}[S_n]$. With the awareness of Schur--Weyl duality, these Hamiltonians have direct physics implication: they preserve $\operatorname{SU}(d)$ symmetry. A well-known example is the Heisenberg Hamiltonian of magnetism \cite{Heisenberg1928,Marshall1955,Lieb62,Tasaki2020}. We provide in the following a brief description of LCU. Practical implementation details with complexity analysis can be found in \cite{Childs2012,Berry_2015,Childs2018}. 

Suppose we are going to simulate $e^{-itH} \ket{\psi}$ for some state $\ket{\psi}$ with $H = \wt{\pi}(f)$ for some $f \in \mathbb{C}[S_n]$. As we written in the main text, $e^{-itH}$ would always be expanded and truncated to a certain order $K$. Let us denote the approximate matrix by $M$ and suppose $M = \sum_{j = 1}^m \alpha_j V_j$ with $V_j$ being unitaries. Note that in our case, $V_j$ are products of permutations $\sigma_{i_1},...,\sigma_{i_K} \in S_n$. The coefficients $\alpha_j$ can be set as positive reals, as their complex phase can be absorbed into $V_j$. We denote by $\Vert \alpha \Vert_1 = \sum_j \vert \alpha_j \vert$ for later use.

Then we need to prepare $\lceil \log_d m \rceil$ ancilla qudits with respect to which the computational basis is registered as $\ket{j}$ for $j = 0,...,m$. Let $W$ be a unitary acting on these ancillas such that 
\begin{align}
	W \ket{0} = \frac{1}{\sqrt{\Vert \alpha \Vert_1}} \sum_j \sqrt{\alpha_j} \ket{j}.
\end{align}
Assume we can also construct control operators: $\ket{j} \bra{j} \otimes V_j$ with $V = \sum_{j=1}^m \ket{j} \bra{j} \otimes V_j$. Then
\begin{align}
	(\bra{0} \otimes I) W^{-1} V W \ket{0} \ket{\psi} =  \frac{1}{\Vert \alpha \Vert_1} \Big( \sum_j \sqrt{\alpha_j} \bra{j} \otimes I \Big)  \Big( \sum_j \sqrt{\alpha_j} \ket{j} V_j \ket{\psi} \Big) = \frac{1}{\Vert \alpha \Vert_1} M \ket{\psi},
\end{align}
which gives the state we want to approximate when we measure ancillas with the outcome being $0$. However, the probability to retrieve this outcome is $\Vert M\ket{\psi} \Vert^2 / \Vert \alpha \Vert_1^2$ which may not be large enough, Then we apply \emph{oblivious amplitude amplification} to amplify it. Roughly speaking, by applying $-U R U^{-1} R$ with $R \vcentcolon = (I - 2\ket{0} \bra{0}) \otimes I$ and $U \vcentcolon = W^{-1} V W$ to $\ket{0} \ket{\psi}$, the amplitude of the target state will be enlarged. Note that as $M \approx e^{-itH}$ is nearly unitary, $\Vert M\ket{\psi} \Vert^2 / \Vert \alpha \Vert_1^2 \approx 1// \Vert \alpha \Vert_1^2$ and $\mathcal{O}(\frac{1}{\Vert \alpha \Vert_1^2})$ times amplification will be enough \cite{Childs2012,Berry_2015,Childs2018}.

In our case, let $H = \wt{\pi}(f)$ with $f = \sum_i c_i \sigma_i$ and $\sigma_i$ being $k$-local. Then
\begin{align}
	\exp(-i \Delta t H) = \sum_{m=0}^K \sum^N_{i_1,..., i_m} \frac{(-i\Delta t)^m}{m!}c_{i_1}...c_{i_m} \sigma_{i_1} \cdots \sigma_{i_m} = \sum^K_{m=0} \sum^N_{i_1,..., i_m} \wt{c}_{{i_1}...{i_m}} \wt{\sigma}_{i_1,... i_m} + \mathcal{O}(\tilde{\epsilon}) = M + \mathcal{O}(\tilde{\epsilon}),
\end{align}
where $N = O(n^k)$ is the number of all $k$-local permutations. To apply LCU, we need $O(K \log_d N)$ qudits. We divide the evolution into small steps with time $\Delta t$ in each step such that
\begin{align}
	\Vert \tilde{c} \Vert_1 = \sum_{k = 0}^K \sum_{i_1,...,i_m}^N c_{i_1...i_m} \leq \sum_{k = 0}^\infty \frac{(-i \Delta t)^m}{m!} \sum_{i_1,...,i_m} c_{i_1} \cdots c_{i_m} = e^{\Delta t \Vert c \Vert_1} = O(1).
\end{align}
Thus applying the oblivious amplitude amplification by constant times is sufficient to complete the task.

%------------------------------------------------------------------------------------------------------------------------------------------------

\end{document}